\newtheorem{theorem}{Theorem}
\newtheorem{lemma}{Lemma}
\begin{document}

\title{Fast pseudothermalization}

\author{Wonjun Lee}
\email{wonjun1998@postech.ac.kr}
\affiliation{Department of Physics, Pohang University of Science and Technology, Pohang, 37673, Republic of Korea}
\affiliation{Center for Artificial Low Dimensional Electronic Systems, Institute for Basic Science, Pohang, 37673, Republic of Korea}
\author{Hyukjoon Kwon}
\email{hjkwon@kias.re.kr}
\affiliation{School of Computational Sciences, Korea Institute for Advanced Study, Seoul 02455, South Korea}
\author{Gil Young Cho}
\email{gilyoungcho@gmail.com}
\affiliation{Department of Physics, Korea Advanced Institute of Science and Technology, Daejeon 34141, Korea}
\affiliation{Center for Artificial Low Dimensional Electronic Systems, Institute for Basic Science, Pohang, 37673, Republic of Korea}
\affiliation{Asia-Pacific Center for Theoretical Physics, Pohang, 37673, Republic of Korea}

\date{\today} 

\begin{abstract}
Quantum resources like entanglement and magic are essential for characterizing the complexity of quantum states. However, when the number of copies of quantum states and the computational time are limited by numbers polynomial in the system size $n$, accurate estimation of the amount of these resources becomes difficult. This makes it impossible to distinguish between ensembles of states with relatively small resources and one that has nearly maximal resources. Such ensembles with small resources are referred to as \enquote{pseudo-quantum} ensembles. Recent studies have introduced an ensemble known as the random subset phase state ensemble, which is pseudo-entangled, pseudo-magical, and pseudorandom. While the current state-of-the-art implementation of this ensemble is conjectured to be realized by a circuit with $O(nt)$ depth, it is still too deep for near-term quantum devices to execute for small $t$. In addition, the strict linear dependence on $t$ has only been established as a lower bound on the circuit depth. In this work, we present significantly improved implementations that only require $\omega(\log n)\cdot O(t[\log t]^2)$ depth circuits, which almost saturates the theoretical lower bound. 
This is also the fastest known for generating pseudorandom states to the best of our knowledge. We believe that our findings will facilitate the implementation of pseudo-ensembles on near-term devices, allowing executions of tasks that would otherwise require ensembles with maximal quantum resources, by generating pseudo-ensembles at a super-polynomially fewer number of entangling and non-Clifford gates.
\end{abstract}

\maketitle

\section{Introduction}
Randomness is a fundamental feature of quantum mechanics. At its core, quantum mechanics involves randomness when transforming a quantum state into classical data through measurements. Moreover, random processes in quantum mechanics, such as random circuit evolution and chaotic dynamics, have been widely considered in connection with various problems in recent years, from understanding the nature of black holes~\cite{yoshida2017,Yoshida2019,Piroli_2020} to demonstrating quantum supremacy~\cite{Boixo_2018,arute2019} or characterizing quantum states in advanced quantum devices~\cite{Knill_2008,Huang_2020,Helsen2022}. The properties of these random quantum objects are described statistically. Thus, to uncover their characteristics, one must prepare multiple copies and perform measurements to learn incrementally from each attempt. Ideally, one would need an infinite number of copies with arbitrary measurement setups to fully know the properties of these random objects. However, in practice, this is not feasible, as the number of realizable copies as well as the executable time of quantum algorithms are limited by a number polynomial in the system size. Recent research has explored how well the properties of random quantum objects can be understood under these practical constraints~\cite{Ji2018,aaronson2023quantum,gu2023little,haug2023pseudorandom,lu2024,gu2024chaos,lee2024pseudo}. For instance, when random processes are coupled with the randomness of measurements, it becomes increasingly difficult to identify the source of randomness. Consequently, the new notion called the pseudorandom state ensemble has been developed which is indistinguishable from the maximally random state ensemble by analyzing only a polynomial number of measurement outcomes~\cite{Ji2018}. 

The random subset phase state ensemble is a notable example of a pseudorandom state ensemble~\cite{aaronson2023quantum}. Beyond its pseudorandom nature, this ensemble exhibits two remarkable properties: pseudo-entangled and pseudo-magical~\cite{aaronson2023quantum,gu2023little}. Basically, it cannot be distinguished from an ensemble possessing maximal entanglement or magic, respectively, through a polynomial number of measurements. These pseudonesses arise from challenges in precise quantification of them within a quantum state via measurements~\cite{Soleimanifar2022,Oliviero_2022,Leone2022,garcia2024}. The construction of this phenomenal ensemble in quantum devices has been developed recently. First, it can be implemented by a quantum-secure pseudorandom permutation, which can be executed in polynomial time~\cite{zhandry2016,aaronson2023quantum}. Second, it can also be generated by random controlled-controlled NOT (CCX) gate circuits~\cite{feng2024}. The latter method is conjectured to require a runtime $O(n^2 t)$ with the system size $n$ and the number of copies $t$.

In this work, we introduce new algorithms for generating a random subset phase state ensemble by applying the Multi-Contolled NOT (MCX) gate randomly under structured sequences of control and target positions. This gate with $m$ control bits becomes the CCX gate when $m=2$ and can be implemented by $O(m)$ CCX gates for arbitrary $m$~\cite{Barenco1995}. We find that our algorithms thermalizes $t$ distinct copies of the initial subset $\{0,1\}^k\times0^{n-k}$ for some $k=\omega(\log n)$ in a circuit of $\omega(\log n)O(t[\log t]^2)$ depth or $O(nt\log (n)\log(t))$ gates, which significantly improves previously proposed methods for generating pseudorandom states or approximate state $t$-design~\cite{Brand_o_2016,Nakata2017,Harrow_2023,Cotler2023,feng2024,haah2024,schuster2024}. Furthermore, we provide an algorithm that implements the random phases of a random subset phase state in a shorter depth. These can be directly used to implement random isometries and thus pseudo-chaotic dynamics in ~\cite{lee2024pseudo}.

\section{Random subset phase state}
A random subset phase state with the system size $n$ and the subset size $k=\omega(\log n)$ is built using two random mappings: random permutation $p:\{0,1\}^k\times\{0,1\}^{n-k}\rightarrow\{0,1\}^n$ and random function $f:\{0,1\}^k\times\{0,1\}^{n-k}\rightarrow\{0,1\}$. Formally, the state is given by
\begin{equation}\label{eq:random-subset-phase-state}
    \ket{\psi_{p,f,a}} = \frac{1}{2^{k/2}} \sum_{b\in\{0,1\}^k}(-1)^{f(b,a)}\ket{p(b,a)} 
\end{equation}
for some $a\in\{0,1\}^{n-k}$. Let $\mathcal{P}$ and $\mathcal{F}$ be the sets of all permutations and functions, respectively. A random subset phase state ensemble is then defined as $\{\ket{\psi_{p,f,a}}|p\sim\mathcal{P},f\sim\mathcal{F}\}$ with the uniform distributions over $\mathcal{P}$ and $\mathcal{F}$. This ensemble is pseudo-random, pseudo-entangled, and pseudo-magical~\cite{Ji2018,aaronson2023quantum,gu2023little}. $\mathcal{P}$ with a fixed $a$ has an one-to-one correspondence $s:\mathcal{P}\rightarrow\mathcal{S}_{2^k}$ to the set $\mathcal{S}_{2^k}$ of all $2^k$-dimensional subsets of $\{0,1\}^n$ such that $s(p) = \{\ket{p(b,a)}|b\in\{0,1\}^k\}$. As they are isomorphic, we will consider random subsets instead of random permutations and drop the dependence on $a$ hereafter. 

Now, let us consider an ensemble $\mathcal{E}$ of random subset phase states with a non-uniform distribution $\tilde{\pi}$ over $\mathcal{S}$ and the uniform distribution over $\mathcal{F}$. Then, the trace distance between the ensemble average of $\mathcal{E}$ and that of the Haar random ensemble is upper bounded by how close $\Phi_{t\leftarrow 2^k}[\tilde{\pi}]$ is to the uniform distribution for all integers $1\leq t \leq 2^k$ due to \textbf{Proposition 1} of~\cite{feng2024}. Here, $\Phi_{t\leftarrow 2^k}[\tilde{\pi}]$ is a stochastic map defined by
\begin{equation}
    \Phi_{t\leftarrow 2^k}[\tilde{\pi}](S') = \binom{2^k}{t}^{-1}\sum_{\{S\in \mathcal{S}_{2^k}|S'\subset S\}} \tilde{\pi}(S)
\end{equation}
with $S'\in\mathcal{S}_t$. More precisely, if $\Phi_{t\leftarrow 2^k}[\tilde{\pi}]$ becomes the uniform distribution with a negligible total variation distance, then $\mathcal{E}$ approximates the $t$-th moments of the Haar random ensemble up to a negligible trace distance. The uniformization of $\Phi_{t\leftarrow 2^k}$ is equivalent to the infinite temperature thermalization, \textit{i.e.}, the full randomization, of all bits of $t$ distinct copies of $\{0,1\}^n$. In this work, we study algorithms that thermalize initial random $t$ distinct copies of $\{0,1\}^k\times 0^{n-k}$ by applying MCX gates randomly. In addition, we also study an algorithm that thermalizes the sign factors $\{(-1)^{f(b,a)}\}_{b\in\{0,1\}^k}$ associated with each copy using Multi-Controlled $Z$ (MCZ) gate. 

Below, we introduce two different algorithms which uniformize $\tilde{\pi}$. \textbf{Algorithm}~\ref{alg:gate_num_opt} is optimized to minimize the number of required gates, while \textbf{Algorithm}~\ref{alg:depth_opt} focuses on minimizing the circuit depth for executing each algorithm. Additionally, we provide \textbf{Algorithm}~\ref{alg:depth_opt_sign} that thermalizes $\{(-1)^{f(b,a)}\}_{b\in\{0,1\}^k}$ in a very short depth.

\section{Thermalization of subsets}\label{sec:subset-thermalization}

In this section, we study algorithms thermalizing bits of random $t$ distinct copies of $\{0,1\}^k\times 0^{n-k}$. Before providing algorithms and related theorems, we first compute the full rank probability of a random binary matrix, which will be useful later.

\begin{lemma}\label{thm:full-rank-prob}
    The probability $p_l$ of a $l\times m$ random matrix $M$ over $\mathbb{F}_2$ with $m\gg l$ and $l=\operatorname{poly}(n)$ for some $n$ having the full rank is lower bounded by
    \begin{equation}\label{eq:full-rank-prob}
        p_l \leq \exp\left( -\frac{q^{m+1} ( q^{-l} - 1 )}{p} - \frac{sp^{\tilde{q}m+1}q^{\tilde{p}m-1}}{(1-s)^2} \right).
    \end{equation}
    with a failure probability negligible in $n$, with the non-trivial element probability $p\leq\frac{1}{4}$, $q=1-p$, $\tilde{p}=(1+\epsilon)p$, $\tilde{q}=1-\tilde{p}$, $s=(pq)^{\tilde{q}}$, and $mp=\omega(\log n)$ for some $0<\epsilon<1$.
\end{lemma}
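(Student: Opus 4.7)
The plan is to add the $l$ rows of $M$ one at a time and bound, at each step, the probability that the new row lies in the span of the previous ones. Write $r_1,\dots,r_l$ for the rows of $M$ and $V_i=\operatorname{span}(r_1,\dots,r_i)$. The chain rule gives
\begin{equation*}
    p_l = \prod_{i=1}^{l} \Pr[r_i\notin V_{i-1}\mid \dim V_{i-1}=i-1],
\end{equation*}
and since the new row is independent of the previous ones, each factor reduces to a worst-case question about a fixed $(i-1)$-dimensional subspace.

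The core of the argument is the deterministic inequality that for every $d$-dimensional subspace $V\subseteq\mathbb{F}_2^m$ and every random vector $v$ with i.i.d.\ Bernoulli($p$) entries with $p\le 1/2$,
\begin{equation*}
    \Pr[v\in V] \le q^{\,m-d}.
\end{equation*}
I would prove this by placing a basis of $V$ in reduced row-echelon form, so that its $d$ basis vectors $v_1,\dots,v_d$ occupy distinct pivot columns $c_1,\dots,c_d$ with $v_k(c_j)=\delta_{kj}$. For any $a\in\{0,1\}^d$, the element $v(a)=\sum_j a_j v_j\in V$ has Hamming weight $|v(a)|\ge|a|$, because the pivot coordinates alone already contribute $|a|$ ones; since $p\le q$, this implies $p^{|v(a)|}q^{m-|v(a)|}\le p^{|a|}q^{m-|a|}$, and summing over $a\in\{0,1\}^d$ telescopes to $q^{m-d}$ by the binomial theorem. (The bound is tight for the axis-aligned subspace $\operatorname{span}(e_1,\dots,e_d)$.)

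Combining the two ingredients gives $p_l\ge\prod_{i=0}^{l-1}(1-q^{m-i})$. Taking logarithms and using $-\log(1-x)\le x/(1-x)=\sum_{k\ge 1}x^k$ converts the product into a double sum $\sum_{k\ge 1}\sum_{i=0}^{l-1}q^{k(m-i)}$, whose $k=1$ slice is the geometric series $\sum_{i=0}^{l-1}q^{m-i}=q^{m+1}(q^{-l}-1)/p$; this reproduces the first exponent of the lemma.

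The main obstacle is then matching the $k\ge 2$ tail to the second exponent $sp^{\tilde q m+1}q^{\tilde p m-1}/(1-s)^2$. My plan is to refine the analysis using concentration of row weights: by a multiplicative Chernoff bound, every row satisfies $|r_i|\le\tilde p m=(1+\epsilon)pm$ except on an event of probability exponentially small in $\epsilon^2 pm=\omega(\log n)$, which is the origin of the ``failure probability negligible in $n$'' caveat. Conditioned on that good event, one can replace the crude bound of the second paragraph by a sharper estimate that pulls out a factor $s=(pq)^{\tilde q}<1$ for each additional basis direction added by a row of typical weight; summing the resulting geometric-like series yields the $(1-s)^{-2}$ denominator (via $\sum_{k\ge 1}k s^k=s/(1-s)^2$) together with the typical-weight prefactor $p^{\tilde q m+1}q^{\tilde p m-1}$. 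The hypotheses $p\le 1/4$ and $mp=\omega(\log n)$ are exactly what keep $s$ bounded away from $1$ and render the bad Chernoff event negligible, respectively. Executing this last matching---reorganizing the $k\ge 2$ contributions into the stated closed form with the correct $\tilde p,\tilde q,s$ substitutions---is the delicate technical step; the earlier paragraphs are essentially mechanical once the subspace-containment bound is in hand.
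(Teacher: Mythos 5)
Your first two ingredients are correct and, in fact, tighter than what the paper does. The paper also builds $M$ row by row, but at each step it only subtracts the probability that the new row vanishes on the free coordinates or coincides with one of the $r$ individual previous rows (using a Chernoff bound on their weights); it silently ignores the $2^r-r-1$ remaining linear combinations. Your subspace-containment bound $\Pr[v\in V]\le q^{m-d}$, proved via the row-echelon/pivot-weight argument and the binomial telescope, accounts for every element of the span in one clean deterministic statement, needs no conditioning on a good Chernoff event, and yields $p_l\ge\prod_{i=0}^{l-1}(1-q^{m-i})$ unconditionally. That part is a genuine improvement in rigor over the paper's decomposition.

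The gap is in your final paragraph, and it is real. First, the second exponent in the lemma does not come from the $k\ge 2$ tail of $-\log(1-x)$: in the paper it arises from the per-step events ``new row equals the $r'$-th previous row of typical weight $\le\tilde p m$,'' i.e.\ from the $|a|=1$ terms $p^{\tilde q(m-r)+1}q^{\tilde p m+\tilde q r-1}$ summed over $r$ via $\sum_r r s^{r}$, which is where $s/(1-s)^2$ originates. Your plan to reorganize the quadratic-and-higher logarithm corrections into that form conflates two unrelated quantities. Second, the matching cannot succeed from your product bound alone: since $1-x<e^{-x}$, your bound falls short of $\exp\left(-\sum_i q^{m-i}\right)$ by a deficit whose leading term is of order $q^{2(m-l)}$, whereas the lemma's second exponent is of order $(p^{\tilde q}q^{\tilde p})^{m}$; for $p\le 1/4$ one has $q^2>p^{\tilde q}q^{\tilde p}$, so the deficit is exponentially larger than the correction the lemma allows, and the stated inequality does not follow. (To be fair, the paper's own proof hides exactly the same issue behind a one-line $1-x\approx e^{-x}$ approximation in the wrong direction, so what you have is arguably no weaker than what the paper proves---but as a proof of the displayed inequality, the last step is missing and the proposed route to it would not close it.)
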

\begin{proof}
    Let us construct a full rank matrix $M$ by sequentially sampling from the first row to the last row. Since $M$ is full rank, its rows are all linearly independent. Let us assume that we have sampled the first $r$ rows which are all linearly independent. Without loss of generality, we can reorder the columns of $M$ so that the first $r\times r$ sub-matrix of $M$ is a full rank matrix. Then, for $m\gg l$, the probability $p_{r+1}$ of sampling a row vector that is linearly independent from the previously sampled rows is given by
    \begin{equation}
        p_{r+1} = 1 - q^{m-l} - \sum_{r'=1}^l p_{r',\mathrm{ind}}.
    \end{equation}
    The second term in the right hand side is the probability of sampling a row with vanishing elements at $[r+1,m]$. $p_{r',\mathrm{ind}}$ is the probability of not sampling the previously sampled $r'$-th row. The number $N_{r'}$ of ones of the $r'$-th row is smaller than $(1+\epsilon)np$ for some small $\epsilon >0$ with a failure probability negligible in $n$ due to the Chernoff bound of
    \begin{equation}
        \operatorname{Pr}\left[N_{r'}>(1+\epsilon)mp\right] \leq \exp(-\frac{\epsilon^2 mp}{2+\epsilon}).
    \end{equation}
    Let $\tilde{p}$ and $\tilde{q}$ be $(1+\epsilon)p$ and $1-\tilde{p}$, respectively. Then, we have
    \begin{equation}
        p_{r',\mathrm{ind}} \geq p^{\tilde{q}(m-r)}q^{\tilde{p}(m-r)}\times pq^{r-1}
    \end{equation}
    with a probability higher than $1-\operatorname{negl}(n)$. This directly implies 
    \begin{equation}
        p_{r+1} \geq 1 - q^{m-r} - lp^{\tilde{q}(m-r)+1}q^{\tilde{p}m+\tilde{q}r-1}.
    \end{equation}
    as well as
    \begin{equation}
        \begin{split}
        p_l 
        &\geq \prod_{r=0}^{l-1} \left(1 - q^{m-r} - rp^{\tilde{q}(m-r)+1}q^{\tilde{p}m+\tilde{q}r-1}\right)\\
        &\approx \exp\left(-\sum_{r=0}^{l-1}\left(q^{m-r} + rp^{\tilde{q}(m-r)+1}q^{\tilde{p}m-\tilde{q}r-1}\right)\right)\\
        &= \exp\left( -\frac{1}{p} q^{m+1} ( q^{-l} - 1 ) \right)\\
        &\quad\times \exp\left( - p^{\tilde{q}m+1}q^{\tilde{p}m-1} \frac{s(1-l s^{l-1}(1-s))}{(1-s)^2} \right)
        \end{split}
    \end{equation}
    with a probability higher than $1-\operatorname{negl}(n)$. For $p\leq 1/4$ and large $l$, this becomes
    \begin{equation}
        p_l \geq \exp\left( -\frac{q^{m+1} ( q^{-l} - 1 )}{p} - \frac{sp^{\tilde{q}m+1}q^{\tilde{p}m-1}}{(1-s)^2} \right).
    \end{equation}
\end{proof}

\begin{figure}
\begin{algorithm}[H]
    \caption{Random Multi-Controlled Circuit (RMC)}
    \label{alg:RMCX}
    \begin{algorithmic}
    \Require $n > 0$, $n> x_1\geq 1$, $n\geq x_2 > x_1$, $m\geq 2$
    \Ensure $C$, $T$
    \State $l \gets 1$
    \State $C \gets$ $m$ choose $\{x_1,\cdots,x_2\}$
    \State $T \gets$ a random element of $\{0,1\}^{n-k}$
    \end{algorithmic}
\end{algorithm}
\end{figure}

\begin{figure}
\begin{algorithm}[H]
    \caption{Gate number optimized bit thermalizer}
    \label{alg:gate_num_opt}
    \begin{algorithmic}
    \Require $n > 0$, $n\geq k > 1$, $\alpha>0$, $t>0$, $m\geq 2$
    \Ensure $\ket{\psi}$
    \State $\ket{\psi} \gets \ket{+}^{\otimes k}\ket{0}^{\otimes(n-k)}$
    \State $l \gets 1$
    \While{$l \leq \alpha t$}
        \State $C, T \gets \operatorname{RMC}(n,1,k,m)$
        \State $x \gets k+1$
        \While{$x\leq n$}
            \If{$T[x-k]$}
                \State $\ket{\psi} \gets m\operatorname{-MCX}_{C,x}\ket{\psi}$
            \EndIf
            \State $x \gets x+1$
        \EndWhile
        \State $l \gets l+1$
    \EndWhile
    \State $l \gets 1$
    \While{$l \leq \alpha t$}
        \State $C, T \gets \operatorname{RMC}(n,k+1,n,m)$
        \State $x \gets 1$
        \While{$x \leq k$}
            \If{$T[x]$}
                \State $\ket{\psi} \gets m\operatorname{-MCX}_{C,x}\ket{\psi}$
            \EndIf
            \State $x \gets x+1$
        \EndWhile
        \State $l \gets l+1$
    \EndWhile
    \end{algorithmic}
\end{algorithm}
\end{figure}

\textbf{Algorithm}~\ref{alg:gate_num_opt} performs random bit-flipping of $t$ distinct copies of $\{0,1\}^k\times 0^{n-k}$ in two stages. In the first stage, it takes the first $k$ bits as control bits and the remaining $n-k$ bits as target bits for the MCX gates. For $t$ randomly chosen distinct copies of $\{0,1\}^k\times 0^{n-k}$, the first $k$ bits have random binary values for $k\ll n$, causing the MCX gates to randomly flip and thermalize the bits in positions $[k+1,n]$. In the second stage, the algorithm reverses the role of the bits, using the thermalized bits in $[k+1,n]$ as control bits and the first $k$ bits as target bits, thereby thermalizing the first $k$ bits.

The number of control bits used in \textbf{Algorithm}~\ref{alg:gate_num_opt} must be adjusted based on the number of copies $t$. This is because thermalizing the bits of $t$ copies at each site requires the use of MCX gates with $t$ distinct control conditions. In the first stage of the algorithm, since there are only $k$ bits available as control bits, the number of distinct conditions is insufficient for $t\gg k$. For instance, when using CCX gates, the maximum number of distinct conditions is $4\times \binom{k}{2}$. If $t$ exceeds this number, it becomes impossible to thermalize the bits using CCX gates, even when utilizing all possible control conditions. Therefore, the performance of the algorithm should be evaluated in two different regimes with fewer and more copies. Below, we introduce two theorems about the performance of the algorithm. \textbf{Theorem}~\ref{thm:CCX-randomization} and~\ref{thm:MCX-randomization} regard the cases with fewer copies $t=O(k)$ and more copies $t=\operatorname{poly}(n)$, respectively.

\begin{figure}[t]
    \includegraphics[width=\linewidth]{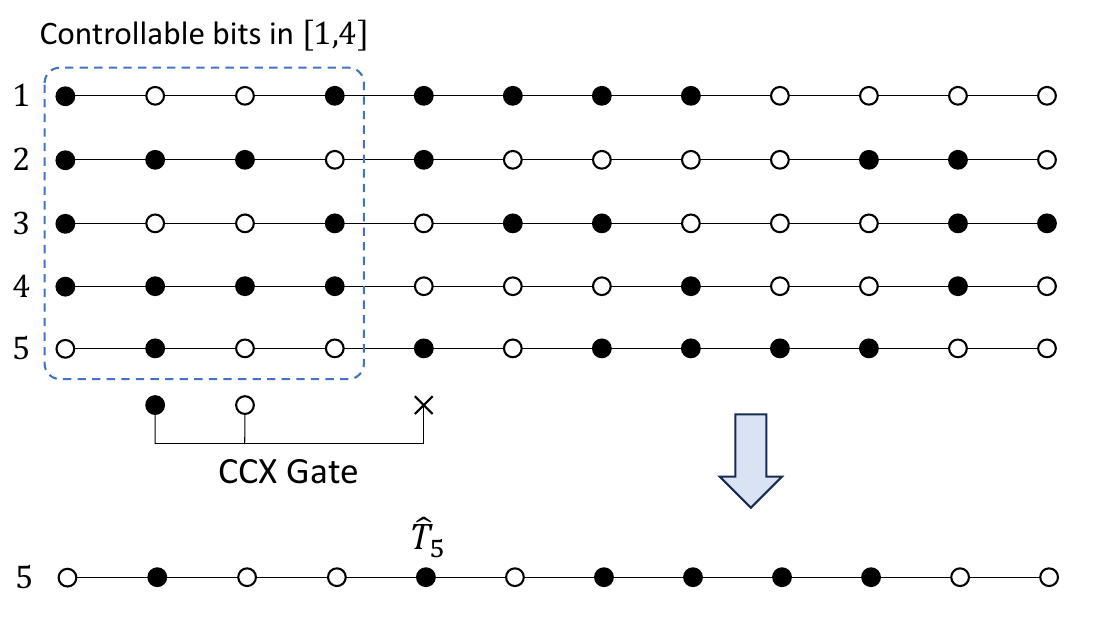}
    \centering
    \caption{ Each circle and its color represent a bit and its value. A CCX gate conditioned on second and third bits targeting to the fifth bit is applied on $\{\mathbf{x}_l\}_{l=1}^t$ with $t=5$ with the probability of $1/2$.  The random application of the CCX gate adds a $\mathbb{Z}_2$ valued random variable $\hat{T}_q$ with $q=5$ on the fifth copy. Controllable bits mean bits that can be used as control bits of CCX gates. These bits are assumed to be thermalized so that such CCX gate flip its target bit with the probability 1/4.
    }
    \label{fig:CCX-update}
\end{figure}

\begin{theorem}\label{thm:CCX-randomization}
    For $k=\omega(\log n)$ and $t\leq k/2$, \textbf{Algorithm}~\ref{alg:gate_num_opt} with $\alpha t = \omega(\log n) \leq k/2$, $\alpha\gg 1$, and $m=2$ thermalizes all bits of $t$ distinct random copies of $\{0,1\}^k\times 0^{n-k}$ with a failure probability negligible in $n$ and $\omega(n\log n)$ CCX gates.
\end{theorem}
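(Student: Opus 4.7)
The plan is to recast each stage of the algorithm as a single linear update on the $t$-copy register, show via Lemma~\ref{thm:full-rank-prob} that the associated random binary matrix has full row rank with overwhelming probability, and then count gates.

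\emph{Stage~1 as a linear update.} During stage~1 the first $k$ bits act only as controls, so they stay at their initial values $c_1,\dots,c_t\in\{0,1\}^k$. At iteration $l\in[1,\alpha t]$ the RMC subroutine draws an independent control pair $C_l=\{x_1^{(l)},x_2^{(l)}\}\subset[1,k]$ and an independent mask $T_l\in\{0,1\}^{n-k}$, and for each $y\in[k+1,n]$ the $t$-bit column at position $y$ receives the XOR update $T_l[y-k]\,u_l$ with $u_l^{(i)}=c_i[x_1^{(l)}]\cdot c_i[x_2^{(l)}]$. Stacking the $u_l$'s into a $t\times m$ matrix $U$ with $m=\alpha t$, the final bit-vector at position $y$ equals $U\,T^{(y)}$ where $T^{(y)}\in\{0,1\}^m$ is uniform and independent across $y$. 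If $\operatorname{rank}(U)=t$, then $\{UT^{(y)}\}_y$ are i.i.d.\ uniform on $\{0,1\}^t$, i.e.\ every position in $[k+1,n]$ is fully thermalized across the $t$ copies.

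\emph{Full rank via Lemma~\ref{thm:full-rank-prob}.} For $t\leq k/2$ and $k=\omega(\log n)$ one has $t^2/2^k=\operatorname{negl}(n)$, so the distinct random $t$-tuple $(c_1,\dots,c_t)$ is negligibly close in total variation to an i.i.d.\ uniform $t$-tuple; we work with the latter. With probability $1-\operatorname{negl}(n)$ (Chernoff plus a union bound over $i\leq t$) every $c_i$ has $|c_i|=(1+o(1))k/2$, so conditional on the $c_i$'s the rows of $U$ are independent, each with i.i.d.\ entries distributed as $\operatorname{Bernoulli}(p_i)$ with $p_i=\binom{|c_i|}{2}/\binom{k}{2}=1/4+o(1)$. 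Applying Lemma~\ref{thm:full-rank-prob} with $l=t$, $m=\alpha t$ (so $m\gg l$) and worst-case parameter $p=1/4-o(1)$ gives $mp=\omega(\log n)$; both exponents in~\eqref{eq:full-rank-prob} are $\operatorname{negl}(n)$ because $q^{(\alpha-1)t}=(3/4)^{\omega(\log n)}$ is superpolynomially small. Hence $\Pr[\operatorname{rank}(U)=t]\geq 1-\operatorname{negl}(n)$.

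\emph{Stage~2 and gate count.} Stage~2 is the mirror of stage~1 with $[1,k]$ and $[k+1,n]$ swapped. After stage~1 the last $n-k$ bits are jointly uniform across copies, and with $n-k\geq n/2\gg k$ available control positions the analogous application of Lemma~\ref{thm:full-rank-prob} to the stage-2 update matrix $U'$ yields $\operatorname{rank}(U')=t$ with probability $1-\operatorname{negl}(n)$, thermalizing the first $k$ bits; since stage~2 leaves the last $n-k$ bits untouched, all $n$ bits of the $t$ copies end up jointly uniform. For the gate count, each stage-1 iteration applies on expectation $(n-k)/2$ CCX gates and each stage-2 iteration $k/2$, giving a total of $\alpha t\cdot n/2=\omega(n\log n)$ CCX gates.

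The delicate step is the second one: the entries of $U$ are not unconditionally i.i.d., since rows share $c_i$ and columns share $C_l$, so the Bernoulli hypothesis of Lemma~\ref{thm:full-rank-prob} has to be justified on the Chernoff-typical event for the Hamming weights of the $c_i$'s, after which the sequential-row argument underlying the lemma goes through with the worst-case $p_i$ in place of a common $p$.
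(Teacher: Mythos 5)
Your proposal follows essentially the same route as the paper's proof: both reduce thermalization of the bits in $[k+1,n]$ to the full row rank of the $t\times\alpha t$ condition-satisfaction matrix (your $U$, the paper's $X$), invoke Lemma~\ref{thm:full-rank-prob} with $p=1/4$, $l=t$, $m=\alpha t$, treat stage~2 by symmetry with the roles of $[1,k]$ and $[k+1,n]$ exchanged, and count $\alpha t\cdot O(n)=\omega(n\log n)$ CCX gates. The only differences are presentational: you spell out why full rank implies that the target columns $UT^{(y)}$ are i.i.d.\ uniform, and you flag the entry-dependence caveat (which, note, conditioning on the Hamming weights of the $c_i$ does not fully remove, since entries within a column still share the same $C_l$), whereas the paper simply asserts that $X$ has independent Bernoulli$(1/4)$ entries.
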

\begin{proof}
Let us sample $t$ distinct copies $\{\mathbf{x}_l\}_{l=1}^t$ of $\{0,1\}^k\times 0^{n-k}$. Let us execute $\operatorname{RMC}(n,1,k,2)$ function $\alpha t$ times and get $\{C_l\}_{l=1}^{\alpha t}$ and $\{T_l\}_{l=1}^{\alpha t}$. Let $(x_{i_l},x_{j_l})$ be $C_l$. In addition, let us assume that $\{T_l\}_{l=1}^{\alpha t}$ be instances of random variables $\{\hat{T}_l\}_{l=1}^{\alpha t}$. Now, let us construct a $t\times\alpha t$ matrix $X$ over $\mathbb{F}_2$ whose element $(p,q)$ is one if $\mathbf{x}_p$ follows $(\mathbf{x}_p)_{i_q} = x_{i_q}$ and $(\mathbf{x}_p)_{j_q} = x_{j_q}$ and zero otherwise. In other words, if $X_{p,q}$ is non-vanishing, then bit-flips of bits in $[k+1,n]$ of the $p$-th copy are given by $\hat{T}_q$. We note different copies share the same random variable vector $\hat{T}_q$. Thus, a single CCX gate cannot flips bits of the copies independently. This is illustrated in Fig.~\ref{fig:CCX-update}. On the other hand, by using multiple CCX gates, each copy can possess an independent random variable vector which would be given by a linear combination of $\{\hat{T}_l\}_{l=1}^{\alpha t}$. This can be achieved when $X$ is full ranked. Below, we study when this happens.

Since $2^k$ is super-polynomially greater than $t$, and $t$ copies are sampled from $\{0,1\}^k\times 0^{n-k}$, their first $k$ bits are random. Thus, $X$ is a random matrix over $\mathbb{F}_2$ whose entries are independent and non-trivial with the probability of $1/4$. Due to \textbf{Lemma}~\ref{thm:full-rank-prob}, the probability $p_t$ of such a matrix having the full rank is lower bounded by Eq.~\eqref{thm:full-rank-prob} with $p=1/4$, $l=t$, $m=\alpha t$, and $0<\epsilon<1$,
\begin{equation}
    \begin{split}
    p_t &\geq \exp\left( -4\left(\frac{3}{4}\right)^{\alpha t+1}\left(\left(\frac{3}{4}\right)^{-t}-1\right) \right)\\
    &\quad\times \exp\left(-\frac{s}{(1-s)^2}\frac{3^{(1+\epsilon)\alpha/4 t-1}}{4^{\alpha t}}\right)
    \end{split}
\end{equation}
with a probability higher than $1-\operatorname{negl}(n)$. Since we have $\alpha t=\omega(\log n)$, for sufficiently large $n$, $p_t$ is given by
\begin{equation}\label{eq:CCX-suss-prob}
    p_t = 1-\operatorname{negl}(n).
\end{equation}
This shows that for sufficiently large $\alpha t$, $X$ is full-ranked, and bits in $[k+1,n]$ of the $t$ copies independently thermalize with a failure probability negligible in $n$.

Next, bits in $[1,k]$ are needed to be updated. This can be done by executing $\operatorname{RMC}(n,k+1,n,2)$ function $\alpha t$ times get $\{C'_l,T'_l\}_{l=1}^{\alpha t}$. Again, let us construct a $t\times\alpha t$ matrix $Y$ over $\mathbb{F}_2$ in the same way. Since bits in $[k+1,n]$ are fully randomized, $Y$ is a full-ranked random matrix over $\mathbb{F}_2$. Thus, CCX gates conditioned on $\{C'_l\}_{l=1}^{\alpha t}$ targeted to sites $\{1,\cdots,k\}$ fully randomize bits in $[1,k]$.
\end{proof}

\begin{theorem}\label{thm:MCX-randomization}
    For $k=\omega(\log n)$ and $t=\operatorname{poly}(n)$, \textbf{Algorithm}~\ref{alg:gate_num_opt} with $\alpha = \omega(\log n)$, and $m=\lceil \log_2 t \rceil$ thermalizes all bits of $t$ distinct random copies of $\{0,1\}^k\times 0^{n-k}$ with a failure probability negligible in $n$ and $\omega(n\log n)\cdot O(t \log t )$ CCX gates.
\end{theorem}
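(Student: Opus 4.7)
The plan is to adapt the matrix-rank argument of Theorem~\ref{thm:CCX-randomization} to the regime $t = \operatorname{poly}(n)$ by enlarging the multi-control degree to $m = \lceil \log_2 t \rceil$. First I would sample $t$ distinct copies $\{\mathbf{x}_l\}_{l=1}^t$ of $\{0,1\}^k \times 0^{n-k}$, run the first stage of \textbf{Algorithm}~\ref{alg:gate_num_opt} to obtain the $\alpha t$ tuples $\{C_l, T_l\}$ from $\operatorname{RMC}(n,1,k,m)$, and form the $t \times \alpha t$ random matrix $X$ over $\mathbb{F}_2$ whose entry $(p,q)$ equals $1$ when the $m$ control bits indexed by $C_q$ of $\mathbf{x}_p$ all satisfy the triggering condition. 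Because $k = \omega(\log n)$ and $2^k$ is super-polynomially larger than $t$, the first $k$ bits of each $\mathbf{x}_p$ are uniform, so the entries of $X$ are independent and non-trivial with probability $p = 2^{-m} \approx 1/t$. As in Theorem~\ref{thm:CCX-randomization}, full rank of $X$ ensures that each copy receives an independent $\mathbb{F}_2$-linear combination of the random target vectors $\{T_q\}$, so bits $[k+1,n]$ of the $t$ copies thermalize independently.

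Next I would apply \textbf{Lemma}~\ref{thm:full-rank-prob} with $l = t$, matrix width $\alpha t$, and $p = 2^{-\lceil \log_2 t\rceil}$. The hypotheses hold: $l = \operatorname{poly}(n)$, $p \leq 1/4$ for $t \geq 4$, and $(\alpha t)\,p = \alpha = \omega(\log n)$. Evaluating Eq.~\eqref{eq:full-rank-prob}, the first exponent behaves like $q^{\alpha t+1}(q^{-t}-1)/p \sim t(e-1)e^{-\alpha}$, which is $\operatorname{negl}(n)$ because $\alpha/\log t \to \infty$ while $\log t = O(\log n)$; the second exponent carries an extra factor $p^{\tilde{q}\alpha t+1}$ and is therefore even smaller. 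Hence $p_t = 1 - \operatorname{negl}(n)$, and the first stage thermalizes bits $[k+1,n]$ of all $t$ copies with overwhelming probability. The second stage then uses the now-uniform bits in $[k+1,n]$ as controls and thermalizes bits $[1,k]$ via the same argument applied to the analogous matrix $Y$.

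For the gate count, each of the $2\alpha t$ outer iterations draws one control set and, since $T$ is uniform over $\{0,1\}^{n-k}$, on average applies an $m$-MCX gate to $(n-k)/2$ target sites, giving $O(n)$ $m$-MCX gates per iteration. By the decomposition of Barenco~\textit{et al.}~\cite{Barenco1995}, each $m$-MCX with $m = \lceil\log_2 t\rceil$ is implemented by $O(\log t)$ CCX gates, yielding a total CCX count of $2\alpha t \cdot O(n) \cdot O(\log t) = \omega(n\log n)\cdot O(t\log t)$, as claimed.

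The main obstacle I anticipate is confirming that \textbf{Lemma}~\ref{thm:full-rank-prob} still delivers $1-\operatorname{negl}(n)$ when the success probability shrinks like $1/t$ rather than being an absolute constant as in the CCX setting; the decisive observation is that the product $(\alpha t) p = \alpha$ remains super-logarithmic by choice of $\alpha$, which is precisely what keeps both exponents in Eq.~\eqref{eq:full-rank-prob} negligible in $n$. Once this single estimate is settled, the remainder of the argument mirrors the proof of Theorem~\ref{thm:CCX-randomization} almost verbatim.
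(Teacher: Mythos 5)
Your proposal follows essentially the same route as the paper's proof: form the $t\times\alpha t$ incidence matrix $X$ over $\mathbb{F}_2$ with entry probability $2^{-\lceil\log_2 t\rceil}\approx 1/t$, invoke \textbf{Lemma}~\ref{thm:full-rank-prob} with $l=t$ and width $\alpha t$ to get full rank with probability $1-\operatorname{negl}(n)$ via the estimate $q^{\alpha t+1}(q^{-t}-1)/p\sim t(e-1)e^{-\alpha}$, and then repeat the argument for the second stage. Your explicit verification of the lemma's hypotheses and the gate-count accounting ($2\alpha t$ rounds $\times$ $O(n)$ targets $\times$ $O(\log t)$ CCX gates per $m$-MCX) are welcome additions that the paper leaves implicit, but the underlying argument is the same.
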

\begin{proof}
Let $\{\textbf{x}_l\}_{l=1}^t$ be $t$ distinct copies of $\{0,1\}^k\times 0^{n-k}$. Let us execute $\operatorname{RMC}(n,1,k,m)$ function $\alpha t$ times and get $\{C_l\}_{l=1}^{\alpha t}$ and $\{T_l\}_{l=1}^{\alpha t}$. Similar to the proof of \textbf{Theorem}~\ref{thm:CCX-randomization}, we can construct a $t\times \alpha t$ matrix $X$ over $\mathbb{F}_2$ whose element $(p,q)$ is one if $((\textbf{x}_p)_{i_{1,q}},\cdots,(\textbf{x}_p)_{i_{m,q}})=(x_{i_{1,q}},\cdots,i_{i_{m,q}})$ and zero otherwise. Each element of $X$ is one with the probability higher than $1/t$. Thus, for large enough $t$, the full rank probability of $X$ in Eq.~\eqref{eq:full-rank-prob} with $l=t$, $m=\alpha t$, and $0<\epsilon<1$ is lower bounded by
\begin{equation}
    \begin{split}
        p_t 
        &\geq \exp\left( -t \left(1-\frac{1}{t}\right)^{\alpha t+1} \left( \left(1-\frac{1}{t}\right)^{-t} - 1 \right) \right)\\
        &\quad\times\exp\left( - \frac{s(1/t)^{(t-1-\epsilon)\alpha+1}(1-1/t)^{(1+\epsilon)\alpha-1}}{(1-s)^2} \right)\\
        &\approx \exp\left( -(t-1) e^{-\alpha} \left( e - 1 \right) \right)\\
        &\quad\times\exp\left( -\frac{(t-1)^{(1+\epsilon)\alpha-1}}{t^{\alpha t}} \frac{s}{(1-s)^2} \right)\\
        &= 1 - \operatorname{negl}(n)
    \end{split}
\end{equation}
with a probability higher than $1-\operatorname{negl}(n)$. Thus, $X$ is full rank with a failure probability negligible in $n$, and all bits in $[k+1,n]$ of the $t$ copies are thermalized by applying $\alpha t$ MCX gates with $m$ conditional bits according to $\{C_l\}_{l=1}^{\alpha t}$ and $\{T_l\}_{l=1}^{\alpha t}$. The remaining bits in $[1,k]$ can be thermalized by applying MCX gates based on results of $\alpha t$ runs of $\operatorname{RMC}(n,k+1,n,m)$.
\end{proof}

The number of gates required to thermalize $t$ distinct copies of $\{0,1\}^k\times 0^{n-k}$ is $\omega(n\log n)$ for fewer copies of $t\leq k/2$. We note that this number is equivalent to that required to thermalize a single copy using the random CCX gate circuit introduced in Ref.~\cite{feng2024}. For more copies of $t=\operatorname{poly}(n)$, an additional $t$ dependence of $O(t\log t)$ is introduced as a consequence of using $\lceil\log_2 t\rceil$-MCX gates. This $t$ dependence almost saturates the theoretical lower bound of $\Omega(t)$ for thermalizing bits of the $t$ copies~\cite{feng2024}. 

On the other hand, the depth required to implement \textbf{Algorithm}~\ref{alg:gate_num_opt} is at least proportional to $n/k$, due to the limited parallelism in applying MCX gates. In the first while loop of \textbf{Algorithm}~\ref{alg:gate_num_opt}, all control bits are confined to the range in $[1,k]$, which is super-polynomially smaller than $n$ when $k=\operatorname{polylog}(n)$. Consequently, implementing \textbf{Algorithm}~\ref{alg:gate_num_opt} requires an almost extensive depth of $\Omega(n/\operatorname{polylog}(n))$. To reduce this depth, the set of bits available as control bits for MCX gates should be expanded incrementally by including sites immediately after their bits are thermalized.  Below, we present \textbf{Algorithm}~\ref{alg:depth_opt} that decomposes the thermalization process into multiple steps, each of which can be executed in parallel, and sequentially extends the control region after each step. This approach allows this algorithm to be implemented using a circuit with a sub-extensive depth of $\operatorname{polylog}(n)$ when $t$ is of order $k$. As a cost of the parallelization, this algorithm requires more CCX gates than \textbf{Algorithm}~\ref{alg:gate_num_opt} when $t=\operatorname{poly}(n)$, though the increase is limited to a multiplicative $\log t$ factor. In addition to the algorithm, we also provide \textbf{Theorem}~\ref{thm:CCX-low-depth-randomization} and~\ref{thm:MCX-low-depth-randomization} which consider the cases with $t=O(k)$ and $t=\operatorname{poly}(n)$ copies, respectively.

\begin{figure}
\begin{algorithm}[H]
    \caption{Parallel RMC (pRMC)}
    \label{alg:pRCCX}
    \begin{algorithmic}
    \Require $n > 0$, $n> x_1\geq 1$, $n\geq x_2 > x_1$, $m\geq 2$, $p > 0$
    \Ensure $C$, $T$
    \State $l \gets 1$
    \State $k \gets x_2-x_1+1$
    \State $Cp \gets$ $mp$ choose $\{x_1,\cdots,x_2\}$
    \State $C \gets$ random $p$ equal partitions of $Cp$
    \State $T \gets$ random $p$ elements of $\{0,1\}^{n-k}$
    \end{algorithmic}
\end{algorithm}
\end{figure}

\begin{figure}
\begin{algorithm}[H]
    \caption{Depth optimized bit thermalizer}
    \label{alg:depth_opt}
    \begin{algorithmic}
    \Require $n > 0$, $n\geq k > 1$, $\alpha>0$, $t>0$, $m\geq 2$
    \Ensure $\ket{\psi}$
    \State $\ket{\psi} \gets \ket{+}^{\otimes k}\ket{0}^{\otimes(n-k)}$
    \State $s \gets k$
    \While{$s < n$}
        \State $p \gets \lfloor s/m \rfloor$
        \State $l \gets 1$
        \While{$l \leq \alpha t$}
            \State $C, T \gets \operatorname{pRMC}(n,1,s,m,p)$
            \State $x \gets 1$
            \While{$x\leq p$} \Comment{This can be done in parallel.}                
                \If{$T[x]$}
                    \State $\ket{\psi} \gets m\operatorname{-MCX}_{C[x],s+x}\ket{\psi}$
                \EndIf
                \State $x \gets x+1$
            \EndWhile
            \State $l \gets l+1$
        \EndWhile
        \State $s \gets s+p$
    \EndWhile
    \State $p \gets \lfloor (n-k)/m \rfloor$
    \State $l \gets 1$
    \While{$l \leq \alpha t$}
        \State $C, T \gets \operatorname{pRMC}(n,k+1,n,m,p)$
        \State $x \gets 1$
        \While{$x\leq p$} \Comment{This can be done in parallel.}
            \If{$T[x]$}
                \State $\ket{\psi} \gets m\operatorname{-MCX}_{C[x],s+x}\ket{\psi}$
            \EndIf
            \State $y \gets y+1$
        \State $x \gets x+1$
        \EndWhile
        \State $l \gets l+1$
    \EndWhile
    \end{algorithmic}
\end{algorithm}
\end{figure}

\begin{theorem}\label{thm:CCX-low-depth-randomization}
    For $k=\omega(\log n)$ and $t\leq k/2$, \textbf{Algorithm}~\ref{alg:depth_opt} with $\alpha t = \omega(\log n) \leq k/2$, and $m=2$ thermalizes $t$ distinct random copies of $\{0,1\}^k\times 0^{n-k}$ in $\omega(\log n)$ depth with a failure probability negligible in $n$.
\end{theorem}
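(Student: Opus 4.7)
The plan is to apply the argument of \textbf{Theorem}~\ref{thm:CCX-randomization} inductively across the super-steps of the outer \texttt{while} loop of \textbf{Algorithm}~\ref{alg:depth_opt}, exploiting the fact that each batch of newly thermalized target bits is added to the pool of controllable bits used in the next super-step. I would index the iterates by $s_0=k$ and $s_{j+1}=s_j+\lfloor s_j/2\rfloor$, so that $s$ grows geometrically at rate $3/2$ and reaches $n$ after only $J=O(\log(n/k))=O(\log n)$ super-steps.

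The inductive hypothesis I would maintain is that at the entrance to super-step $j$, the bits of the $t$ copies on positions $[1,s_j]$ are jointly close to uniform on $\mathbb{F}_2^{ts_j}$, with total failure probability negligible in $n$. For the base case ($j=0$), since the $t$ distinct copies are drawn from $\{0,1\}^k\times 0^{n-k}$ with $k=\omega(\log n)$ and $t\leq k/2\ll 2^k$, a birthday-style bound $O(t^2/2^k)=\operatorname{negl}(n)$ shows they behave as i.i.d.\ uniform on $\{0,1\}^k$ up to negligible error. For the inductive step, fix one of the $\lfloor s_j/2\rfloor$ new target positions $s_j+x$ and construct, as in \textbf{Theorem}~\ref{thm:CCX-randomization}, the $t\times\alpha t$ matrix $X_x$ over $\mathbb{F}_2$ whose $(r,q)$-entry records whether copy $r$ satisfies the control pair assigned to target $x$ in layer $q$. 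Under the IH the control pairs in each layer are drawn from $[1,s_j]$ with all bits of all copies jointly uniform, so the entries of $X_x$ are i.i.d.\ Bernoulli with parameter $1/4$. Invoking \textbf{Lemma}~\ref{thm:full-rank-prob} with non-trivial element probability $1/4$, $l=t$, and $m=\alpha t=\omega(\log n)$ gives $\Pr[X_x \text{ full rank}]=1-\operatorname{negl}(n)$, and on this event the new target bit is an independent uniform $\mathbb{F}_2$-linear combination of the fresh $\hat{T}$ variables, yielding a uniform $t$-tuple across copies. Joint uniformity across distinct targets in the same super-step then follows because those targets use disjoint control supports drawn from a larger set of independent random bits, and their $\hat{T}$ variables are freshly sampled by \texttt{pRMC}. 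A union bound over at most $n$ target positions and $J=O(\log n)$ super-steps, plus the identical final \texttt{while} loop that thermalizes $[1,k]$ using the now-random tail $[k+1,n]$, keeps the aggregate failure probability negligible in $n$.

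For the depth, each layer inside a super-step applies $\lfloor s_j/2\rfloor$ CCX gates with pairwise disjoint supports (the controls form a partition and each gate has its own target), so they commute and execute in a single time step; each super-step thus has depth exactly $\alpha t$, and the total depth over $J+1=O(\log n)$ super-steps is $(J+1)\alpha t=O(\log n)\cdot\omega(\log n)\subset\omega(\log n)$, matching the claim.

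The step I expect to require the most care is propagating the \emph{joint} distributional hypothesis, not merely marginal thermalization, through a super-step in which $\Theta(s_j)$ new target bits are written in parallel. The key observation is that full row rank of $X_x$ is precisely the condition for the linear update to be surjective onto $\mathbb{F}_2^t$, hence to produce independent uniform target bits across copies, and independence across targets then follows from the disjointness of their control supports together with independence of bits at distinct positions in the IH.
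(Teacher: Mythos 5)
Your proposal follows essentially the same route as the paper's proof: induction over super-steps with the controllable region growing geometrically at rate $3/2$, the full-rank matrix argument from \textbf{Lemma}~\ref{thm:full-rank-prob} applied at each step, and a depth count of $\alpha t$ per super-step times $O(\log n)$ super-steps. Your additional care about joint (rather than marginal) uniformity across parallel targets and the birthday-bound base case are refinements the paper leaves implicit, but the argument is the same.
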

\begin{proof}
    Due to \textbf{Theorem}~\ref{thm:CCX-randomization}, the first iteration of applying CCX gates according to $\operatorname{pRMC}(n,1,k,m,\lfloor k/2\rfloor)$ theramlizes bits in $[s_1+1,s_2]$ of the $t$ copies with $s_1=k$ and $s_2=(3/2)k$. Let us assume that we have completed the $(i-1)$-th iteration. This implies that bits of $t$ copies in $[s_1+1,s_i]$ are thermalized. Thus, the matrix $X$ defined in the proof of \textbf{Theorem}~\ref{thm:CCX-randomization} constructed from randomly partitioned control bits in $[1,s_i]$ is again a random matrix over $\mathbb{F}_2$. Thus, the $i$-th iteration thermalizes bits in $[s_i+1,s_{i+1}]$. By the mathematical induction, \textbf{Algorithm}~\ref{alg:depth_opt} flips all bits in $[k+1,n]$ of $t$ distinct random copies of $\{0,1\}^k\times 0^{n-k}$.

    At the $i$-th iteration, \textbf{Algorithm}~\ref{alg:depth_opt} applies $\alpha t$ CCX gates on each of bits in $[s_i+1,s_{i+1}]$ conditioned on bits in $[1,s_i]$. By finding $\alpha t$ random coverings of $[1,s_i]$ by non-overlapping pairs of bits, $s_i/2$ bits can be simultaneously targeted and randomly flipped. We note that this way to choose pairs of conditional bits does not affect the proof of \textbf{Theorem}.~\ref{thm:CCX-low-depth-randomization}. Since we have $s_i/2=s_{i+1}-s_i$, CCX gates conditioned on bit pairs of each covering update all bits in $[s_i+1,s_{i+1}]$ simultaneously. Thus, the depth of a circuit implementing the $i$-th iteration is given by $\alpha t$. At each iteration, the number of bits that can be used as controlled bits is increasing by $3/2$ times. Thus, the sum of the depth of each circuit implementing each iteration is $O(\alpha t \log n)$. Since we have, $\alpha t=\omega(\log n)$, the depth is $\omega(\log n)$.
\end{proof}

\begin{theorem}\label{thm:MCX-low-depth-randomization}
    For $k=\omega(\log n)$ and $t=\operatorname{poly}(n)$, \textbf{Algorithm}~\ref{alg:depth_opt} with $\alpha = \omega(\log n)$, and $m=\lceil \log_2 t \rceil$ thermalizes $t$ distinct random copies of $\{0,1\}^k\times 0^{n-k}$ in $\omega(\log n)O\left(t [\log (t)]^2\right)$ depth with a failure probability negligible in $n$.
\end{theorem}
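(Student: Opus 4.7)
The plan is to mirror the inductive argument of Theorem~\ref{thm:CCX-low-depth-randomization} with $m = \lceil \log_2 t \rceil$-bit controls, invoking Theorem~\ref{thm:MCX-randomization}'s full-rank analysis in place of Theorem~\ref{thm:CCX-randomization}'s at each outer iteration. First I would let $s_i$ denote the value of $s$ at the start of the $i$-th outer iteration, so $s_1 = k$ and $s_{i+1} = s_i + \lfloor s_i/m\rfloor$. The inductive hypothesis would be that all bits in $[1, s_i]$ of the $t$ copies are independently thermalized. Under this hypothesis, for each newly targeted bit $s_i + x$ the $t \times \alpha t$ matrix $X$ over $\mathbb{F}_2$ defined in the proof of Theorem~\ref{thm:MCX-randomization} has entries equal to $1$ with probability $2^{-m} \geq 1/(2t)$; applying Lemma~\ref{thm:full-rank-prob} with $\alpha = \omega(\log n)$ then yields full rank with failure probability negligible in $n$, so the new bit is independently thermalized across the copies. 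Induction randomizes all of $[k+1, n]$, and the second while loop of the algorithm handles $[1, k]$ symmetrically.

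Next I would bound the depth. Within a single $l$-iteration of the inner loop, the $p = \lfloor s_i/m\rfloor$ MCX gates produced by $\operatorname{pRMC}$ act on disjoint targets with controls drawn from a disjoint partition of $[1, s_i]$, so they commute and execute in parallel. Each $m$-MCX decomposes into $O(\log t)$ CCX-gate depth~\cite{Barenco1995}, so an $l$-step contributes $O(\log t)$ depth; the $\alpha t$ $l$-steps are sequential (shared targets), yielding depth $O(\alpha t \log t)$ per outer iteration. The outer loop advances $s_i$ by factor $1 + 1/m$ and so runs $O(m \log(n/k)) = O(\log t \log n)$ times, giving total depth $O(\alpha t (\log t)^2 \log n)$. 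Setting $\alpha = \omega(\log n)$ and absorbing the outer-loop $\log n$ into $\alpha$ yields the claimed $\omega(\log n)\cdot O(t[\log t]^2)$ depth.

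The hard part will be justifying that the matrices $X$ built at the $i$-th iteration genuinely satisfy the i.i.d.\ hypothesis of Lemma~\ref{thm:full-rank-prob}. Independence of $X$'s entries across the $t$ copies would follow from the induction hypothesis that $[1, s_i]$ is thermalized independently across copies; independence across the $\alpha t$ columns would follow because each call to $\operatorname{pRMC}$ samples its $m$-tuples fresh; and independence across the $p$ parallel targets within a single $l$-step would follow because the random equal partition of $[1, s_i]$ used by $\operatorname{pRMC}$ selects disjoint bit positions for different targets, so their match-indicators involve disjoint underlying random bits. Once these independence statements are in place, the full-rank analysis of Theorem~\ref{thm:MCX-randomization} would transfer to each target and the induction would close.
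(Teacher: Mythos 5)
Your proposal follows essentially the same route as the paper's proof: an induction over the outer iterations that invokes Theorem~\ref{thm:MCX-randomization}'s full-rank analysis (via Lemma~\ref{thm:full-rank-prob}) at each step, parallelization of the $p=\lfloor s_i/m\rfloor$ MCX gates through the disjoint partition from $\operatorname{pRMC}$, and the same depth accounting $O(\alpha t\log n\,[\log t]^2)$ from $O(\log n\log t)$ outer iterations, $\alpha t$ sequential gate layers each, and the $O(m)$ CCX decomposition of each MCX. Your extra care about the i.i.d.\ hypotheses of Lemma~\ref{thm:full-rank-prob} (and the sharper $2^{-m}\geq 1/(2t)$ entry probability) is a refinement the paper glosses over, not a different argument.
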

\begin{proof}
    A similar proof of \textbf{Theorem}~\ref{thm:CCX-low-depth-randomization} can be applied here. Due to \textbf{Theorem}~\ref{thm:MCX-randomization}, the first iteration generates random bits of $t$ copies in $[s_1+1,s_2]$ with $s_1=k$ and $s_2=(1+\lfloor 1/m\rfloor )k$. Let us assume that we have completed the $(i-1)$-th iteration. Then, the matrix $X$ defined in the proof of \textbf{Theorem}~\ref{thm:MCX-randomization} constructed from randomly sampled sets of $\log_2 t$ control bits in $[1,s_i]$ is a random matrix over $\mathbb{F}_2$, and the $i$-th iteration flips all bits in $[s_i+1,s_{i+1}]$ with the probability of $1/2$. By the mathematical induction, \textbf{Algorithm}~\ref{alg:depth_opt} flips all bits in $[k+1,n]$ of $t$ distinct random copies of $\{0,1\}^k\times 0^{n-k}$.

    At the $i$-th iteration, \textbf{Algorithm}~\ref{alg:depth_opt} applies $m$-MCX gates $\alpha t$ times targeting each of bits in $[s_i+1,s_{i+1}]$ conditioned on bits in $[1,s_i]$. By finding $\lfloor s_i/m\rfloor$ random equal partitions of $[1,s_i]$ with the partition size $m$, all bits in $[s_i+1,s_{i+1}]$ can be simultaneously targeted and flipped randomly. In other words, the depth of a circuit implementing the $i$-th iteration is given by $\alpha t$. 
    
    At each iteration, the number of bits that can be used as control-bits is increasing by $(1+\lfloor 1/m \rfloor)$ times. Then, the total number of iterations required to get $n$ controllable bits is asymptotically given by $\log n \log_2 t$ since we have
    \begin{equation}
        n \approx \left(1+\frac{1}{\log_2 t}\right)^{\log n \log_2 t}
    \end{equation}
    for sufficiently large $n$. Since each MCX gate with $m$ control bits can be implemented by $O(m)$ CCX gates, the total depth of a circuit implementing \textbf{Algorithm}~\ref{alg:depth_opt} is given by $O\left(\alpha t \log (n) [\log (t)]^2\right)=\omega(\log n)O(t[\log (t)]^2)$.
\end{proof}

\section{Thermalization of signs}\label{sec:sign-thermalization}

We have introduced two algorithms for generating random subsets. However, to create random subset phase states, we need to additionally implement random signs of the state $\ket{\psi}$ in Eq.\eqref{eq:random-subset-phase-state}. An approximate implementation of these random signs, up to the $t$-th moment, can be achieved by applying signed MCZ gates randomly after thermalizing $t$ distinct copies of $\{0,1\}^n$ using, for example, \textbf{Algorithm}~\ref{alg:gate_num_opt} or~\ref{alg:depth_opt}. A signed MCZ gate flips the sign of a computational state if its condition is satisfied, and its target bit has $0$ or $1$ depending on the sign of the gate. Effectively, it can be thought of as a MCX gate targeting an auxiliary bit that encodes signs of the state. Thus, similar to the bit-flip algorithms, random signed MCZ gates can thermalize signs of the $t$ copies with a circuit depth shorter than that of \textbf{Algorithm}~\ref{alg:depth_opt}. Consequently, the depth required to generate $\ket{\psi}$ is dominated by the depth needed to thermalize the subset distribution. 

As a side note, a subset phase state ensemble without any random phase factors is already a pseudorandom ensemble when the subset size is neither too small nor too large~\cite{giurgicatiron2023}. Therefore, the ability to randomize the subsets alone is sufficient to generate a pseudorandom ensemble.

Below, we present $\textbf{Algorithm}$~\ref{alg:depth_opt_sign}, which implements a random sign operator $[F]_{b,b'}=\delta_{b,b'}(-1)^{f(b)}$, where $f$ is a random function $f:\{0,1\}^n\rightarrow \{0,1\}$, in a depth-optimized manner. This random sign operator is crucial for ensuring that the random subset states exhibit pseudorandomness when $k=\omega(\log n)$. More precisely, the exactness of the $t$-th moments of random phases is used to show that the random subset phase state ensemble forms a state $t$-design with a negligible error in \textbf{Lemma 2.4} of Ref.~\cite{aaronson2023quantum}.

\begin{figure}
\begin{algorithm}[H]
    \caption{Depth optimized sign thermalizer}
    \label{alg:depth_opt_sign}
    \begin{algorithmic}
    \Require $n > 0$, $p>0$, $\alpha>0$, $t>0$, $m\geq 2$, $\ket{\psi}$
    \Ensure $\ket{\psi}$
    \State $m' \gets m-1$
    \State $l \gets 1$
    \While{$l \leq \lceil \alpha t / p \rceil$}
        \State $C, T \gets \operatorname{pRMC}(n,1,mp,m,p)$
        \State $x \gets 1$
        \While{$x\leq p$}\Comment{This can be done in parallel.}
            \If{$T[x]$}
                \State $\ket{\psi} \gets m'\operatorname{-MCZ}_{(C[x][1],\cdots,C[x][m']),mx}\ket{\psi}$
                \State $\ket{\psi} \gets (-1)^{C[x][m]} \ket{\psi}$
            \EndIf
            \State $x \gets x+1$
        \EndWhile
        \State $l \gets l+1$
    \EndWhile
    \end{algorithmic}
\end{algorithm}
\end{figure}

\begin{theorem}\label{thm:low-depth-sign-randomization}
    For $t\leq n$, \textbf{Algorithm}~\ref{alg:depth_opt} with $p=n$, $\alpha t = \omega(\log n) \leq n$, and $m=1$ thermalizes signs of random distinct $t$ copies of $\{0,1\}^n$ in $O(1)$ depth with a failure probability negligible in $n$.
\end{theorem}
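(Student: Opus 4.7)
The plan is to mirror the rank-based argument of \textbf{Theorem}~\ref{thm:CCX-randomization}, with the important simplification that each copy now carries only a one-bit sign register rather than an $n$-bit string, so full thermalization is considerably easier to achieve.

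First, I would verify the depth. With $p=n$ and $m=1$, one iteration of \textbf{Algorithm}~\ref{alg:depth_opt_sign} applies up to $n$ non-overlapping single-qubit $Z$ operations (together with the accompanying global sign multiplications) simultaneously, which has depth $O(m)=O(1)$. Since $\alpha t \leq p = n$, the outer loop executes only $\lceil \alpha t/p\rceil=1$ iteration, so the total circuit depth is $O(1)$, matching the advertised bound.

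Next, I would translate a single iteration into a linear action on the signs. Let $\{\mathbf{x}_\ell\}_{\ell=1}^t \subset \{0,1\}^n$ be the $t$ random distinct copies. Conditionally on $T[x]=1$, the gate applied at site $x$ adds $(\mathbf{x}_\ell)_x$ to the sign of the $\ell$-th copy, while the factor $(-1)^{C[x][1]}$ contributes a common $\mathbb{F}_2$-constant to every copy. Summing over $x$, the sign-update vector is
\begin{equation*}
    \Delta\mathbf{s}\;=\;X\,T\;+\;\tau\,\mathbf{1}_t \;\in\; \mathbb{F}_2^t,
\end{equation*}
where $X$ is the $t\times n$ binary matrix whose $\ell$-th row is $\mathbf{x}_\ell$, $T\in\{0,1\}^n$ is the uniform random bit-string produced by \textbf{Algorithm}~\ref{alg:pRCCX}, and $\tau=\sum_x T[x]\,C[x][1]\bmod 2$ is a global scalar. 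Whenever $X$ has row rank $t$, the map $T\mapsto X T$ is surjective with equal-size fibers, so $\Delta\mathbf{s}$ is uniform on $\mathbb{F}_2^t$ regardless of $\tau$, and the signs are thermalized exactly.

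Finally, I would bound the probability that $X$ is full rank. Because $\{\mathbf{x}_\ell\}$ are drawn uniformly and distinctly from $\{0,1\}^n$, $X$ is, up to corrections negligible in $n$, a uniformly random $t\times n$ matrix over $\mathbb{F}_2$, and the classical estimate gives
\begin{equation*}
    \Pr[\operatorname{rank}(X)=t] \;\geq\; \prod_{i=0}^{t-1}\bigl(1-2^{i-n}\bigr) \;\geq\; 1 - 2^{t-n}.
\end{equation*}
Under the hypothesis $\alpha t=\omega(\log n)\leq n$ with growing $\alpha$, one has $t\leq n/\alpha$, hence $n-t=\omega(\log n)$, so the failure probability $2^{t-n}=2^{-\omega(\log n)}$ is negligible in $n$. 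The main obstacle is that \textbf{Lemma}~\ref{thm:full-rank-prob} assumes a non-trivial element probability $p\leq 1/4$, whereas each entry of $X$ is Bernoulli-$1/2$; I therefore use the sharper exact rank formula above rather than appealing to \textbf{Lemma}~\ref{thm:full-rank-prob}, and I must carefully track the $\omega(\log n)$ gap between $t$ and $n$ implicit in the parameter constraint in order to guarantee negligible failure in the tightest regime where $t$ approaches $n$.
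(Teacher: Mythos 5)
Your argument follows the same overall strategy as the paper's proof: the depth count comes from applying the $n$ single-site $Z$ gates of the single outer-loop iteration in parallel, and sign thermalization is reduced to a full-rank condition on the binary matrix $X$ recording which copies are affected by which gates. Where you genuinely diverge is in how the full-rank probability is established. The paper reuses the matrix construction of \textbf{Theorem}~\ref{thm:CCX-randomization}, observes that the entry probability is now $1/2$ rather than $1/4$, and simply asserts that this \enquote{does not alter the scaling} of the bound from \textbf{Lemma}~\ref{thm:full-rank-prob} --- even though that lemma is stated only for entry probability $p\leq 1/4$. You correctly flag this mismatch and replace the appeal to the lemma with the exact product formula $\prod_{i=0}^{t-1}\bigl(1-2^{i-n}\bigr)\geq 1-2^{t-n}$ for uniform Bernoulli-$1/2$ matrices, which is both sharper and actually applicable here; your explicit decomposition $\Delta\mathbf{s}=XT+\tau\,\mathbf{1}_t$ also makes transparent why full row rank yields exactly uniform joint sign flips (the global shift $\tau$ being irrelevant). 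The caveat you raise at the end is a real one, not pedantry: the bound is negligible only when $n-t=\omega(\log n)$, which the theorem's stated hypotheses ($t\leq n$, $\alpha t\leq n$) do not by themselves guarantee --- if $\alpha=1$ and $t=n$, the square random matrix is full rank only with constant probability $\approx 0.29$. The companion \textbf{Theorem}~\ref{thm:CCX-randomization} assumes $\alpha\gg 1$, and your proof makes explicit that the same assumption (or any $\alpha$ bounded away from $1$) is needed here; tracking that gap is a necessary repair of a point the paper leaves implicit.
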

\begin{proof}
    Let us sample $t$ distinct copies $\{\textbf{x}_l\}_{l=1}^t$ of $\{0,1\}^n$. When $m=1$, a MCZ gate is given by the Pauli $Z$ operator at the control bit. Thus, up to $n$ MCZ gates can be applied in parallel. Next, let us construct the matrix $X$ defined in the proof of \textbf{Theorem}~\ref{thm:CCX-randomization}. Different from the proof, here, each element of $X$ is related to a random variable (vector) flipping the signs, not the bits, of the $t$ copies. In addition, since the copies are randomly sampled from $\{0,1\}^n$, the probability of each element of $X$ being non-trivial is $1/2$, which differs from the probability set by the proof. However, it does not alter the scaling of the full rank probability of $X$. Therefore, $\alpha t = \omega(\log n)$ applications of MCZ gates thermalize the signs of the copies.
\end{proof}

This theorem shows the efficiency of \textbf{Algorithm}~\ref{alg:depth_opt_sign} which thermalizes the signs of a random subset phase state by a unit depth circuit. However, when $t$ is larger than $n$, it becomes impossible to use this theorem as there does not exist a positive integer $\alpha$ such that $\alpha t \leq n$. For a generic $t=\operatorname{poly}(n)$, we again need to use multi-controlled gates with $m=\lceil\log_2 n\rceil$.

\begin{theorem}\label{thm:low-depth-sign-randomization}
    For $t=\operatorname{poly}(n)$, \textbf{Algorithm}~\ref{alg:depth_opt} with $p=\lfloor n/\lceil\log_2 n\rceil\rfloor$, $\alpha =\omega(\log n)$, and $m=\lceil\log_2 n\rceil$ thermalizes signs of random distinct $t$ copies of $\{0,1\}^n$ in $\omega(\log n)\cdot O(t/n)$ depth with a failure probability negligible in $n$.
\end{theorem}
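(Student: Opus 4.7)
The plan is to extend the $m=1$ case of the preceding Theorem~\ref{thm:low-depth-sign-randomization} by replacing single-qubit $Z$'s with $(m-1)$-controlled $Z$'s, paying in gate locality what is gained in sample complexity per layer. As in Theorems~\ref{thm:MCX-randomization} and~\ref{thm:MCX-low-depth-randomization}, everything reduces to checking that a random $\mathbb{F}_2$ matrix $X$ has full rank via Lemma~\ref{thm:full-rank-prob}.

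Concretely, I would sample $t$ distinct random copies $\{\mathbf{x}_l\}_{l=1}^t$ of $\{0,1\}^n$ and run the outer while loop $L=\lceil\alpha t/p\rceil$ times, producing $\alpha t$ signed MCZ gates in total. Define a $t\times\alpha t$ matrix $X$ over $\mathbb{F}_2$ whose entry $(l,q)$ is $1$ iff copy $\mathbf{x}_l$ satisfies the control condition of the $q$-th signed MCZ gate. If $X$ has rank $t$, then each copy acquires an independent uniform $\pm 1$ sign given by a linearly independent combination of the i.i.d.\ phase variables produced by the $\alpha t$ gates, which is precisely the sign-thermalization condition. Entries of $X$ are Bernoulli with parameter $p_{\mathrm{prob}} = 2^{-(m-1)}\approx 2/n\leq 1/4$ (for $n\geq 8$). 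Applying Lemma~\ref{thm:full-rank-prob} with $l=t$, column count $\alpha t$, and this $p_{\mathrm{prob}}$, the required hypothesis $\alpha t\cdot p_{\mathrm{prob}} = \omega(\log n)$ becomes $\alpha t/n=\omega(\log n)$, which holds in the intended regime $t\geq n$ (complementary to the preceding $m=1$ theorem) once $\alpha=\omega(\log n)$; the lemma then yields full rank of $X$ with failure probability $\operatorname{negl}(n)$.

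To read off the depth, observe that each outer iteration executes $p=\lfloor n/\lceil\log_2 n\rceil\rfloor$ signed MCZ gates in parallel on disjoint qubit sets, and that each $(m-1)$-MCZ decomposes into $O(m)=O(\log n)$ two-qubit gates. Hence the total circuit depth is $L\cdot O(m) = O(\alpha t\log^2(n)/n)$, which lies in $\omega(\log n)\cdot O(t/n)$ since $\alpha$ is super-logarithmic and absorbs the extra polylog factors.

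The step I expect to be the main obstacle is the effective independence of the entries of $X$. The pRMC routine samples the $mp$ control positions of one outer iteration \emph{without replacement} from $\{1,\dots,n\}$ and then partitions them into the $p$ disjoint MCZ gates of that layer, which induces correlations between the $p$ columns of $X$ produced in the same layer. To handle this, I would mimic the inductive conditioning used in the proofs of Theorems~\ref{thm:MCX-randomization} and~\ref{thm:MCX-low-depth-randomization}: after conditioning on all previously sampled control positions, the marginal probability that a new column hits a given copy still equals $p_{\mathrm{prob}}$ to leading order in $1/n$, so the lower bound on $p_t$ in Lemma~\ref{thm:full-rank-prob} survives up to a $\operatorname{negl}(n)$ correction and the full-rank conclusion carries through.
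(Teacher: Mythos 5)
Your proposal follows essentially the same route as the paper: both reduce sign thermalization to the full-rank condition on the $t\times\alpha t$ matrix $X$ via Lemma~\ref{thm:full-rank-prob} (through the argument of Theorem~\ref{thm:MCX-randomization}) and read off the depth as $\lceil\alpha t/p\rceil$ with $p=\lfloor n/\lceil\log_2 n\rceil\rfloor$. If anything you are more careful than the paper — explicitly computing the Bernoulli parameter $2^{-(m-1)}\approx 2/n$, noting that the hypothesis $\alpha t\cdot p_{\mathrm{prob}}=\omega(\log n)$ forces the regime $t\gtrsim n$ (complementary to the $m=1$ theorem), accounting for the $O(m)$ decomposition of each MCZ, and flagging the without-replacement correlations within a layer, all of which the paper's proof passes over silently.
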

\begin{proof}
    First, $p$ is the maximum number of MCZ gates with $m$ conditional bits. Following the proof of \textbf{Theorem}~\ref{thm:MCX-randomization}, the matrix $X$ has full rank with a negligible failure probability when $\alpha=\omega(\log n)$ MCZ gates are applied. Here, similar to the proof of \textbf{Theorem}~\ref{thm:low-depth-sign-randomization}, elements of $X$ are related to sign-flips of a random subset phase state $\ket{\psi}$. Since the depth of \textbf{Algorithm}~\ref{alg:depth_opt} is given by $\lceil\alpha t/p\rceil$, it thermalizes signs of $\ket{\psi}$ in $\omega(\log n)\cdot O(t/n)$ depth.
\end{proof}

\section{From erroneous sampling to pseudorandomness}

In Sec.~\ref{sec:subset-thermalization} and~\ref{sec:sign-thermalization}, we present algorithms for thermalizing subsets and the signs of random subset phase states. In addition, we demonstrate that they can be implemented using circuits with a small number of gates or very short depth, with negligible failure probabilities. However, this does not directly imply that the resulting random subset phase state ensemble is pseudorandom. The following theorem shows that the former, the erroneous sampling, indeed implies the latter, pseudorandomness of the resulting ensemble.

\begin{theorem}
    Let $t$ be the number of copies given by $t=\operatorname{poly}(n)$. Let $\sigma$ be the $t$ copies of the random subset phase state ensemble, and let $\sigma_\mathrm{E}$ be an ensemble of $t$ copies of subset phase states whose $t$-th moments of subsets and signs are fail to be thermalized. Let $p_\mathrm{fail}$ be the probability of getting a state in $\sigma_\mathrm{E}$, and let $\rho$ be the density matrix description of this sampling process, which is given by
    \begin{equation}
        \rho = (1-p_\mathrm{fail})\sigma + p_\mathrm{fail} \sigma_\mathrm{E}.
    \end{equation}
    Then, if $p_\mathrm{fail}$ is negligible, then the trace distance between $\rho$ and the ensemble average of Haar random states $\rho_\mathrm{Haar}=\int d\psi_\mathrm{Haar}\ketbra{\psi}{\psi}^{\otimes t}$ is also negligible.
\end{theorem}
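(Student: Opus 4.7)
The plan is to reduce the claim to the known pseudorandomness of the ideal random subset phase state ensemble and a simple convexity estimate, via a triangle inequality. By the definition of $\rho$, we can write
\begin{equation}
    \rho - \rho_{\mathrm{Haar}} = (\sigma - \rho_{\mathrm{Haar}}) + p_{\mathrm{fail}}(\sigma_{\mathrm{E}} - \sigma),
\end{equation}
so the triangle inequality for the trace norm gives
\begin{equation}
    \|\rho - \rho_{\mathrm{Haar}}\|_1 \leq \|\sigma - \rho_{\mathrm{Haar}}\|_1 + p_{\mathrm{fail}}\,\|\sigma_{\mathrm{E}} - \sigma\|_1.
\end{equation}
Thus it suffices to control the two terms separately.

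For the second term, since $\sigma_{\mathrm{E}}$ and $\sigma$ are both valid $t$-copy density matrices, $\|\sigma_{\mathrm{E}} - \sigma\|_1 \leq 2$, so this contribution is at most $2p_{\mathrm{fail}}$, which is negligible in $n$ by hypothesis. No further work is needed here.

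For the first term, I would invoke the pseudorandomness of the ideal random subset phase state ensemble. Specifically, the ensemble defined in Eq.~\eqref{eq:random-subset-phase-state} with uniform distributions over $\mathcal{P}$ and $\mathcal{F}$ forms an approximate state $t$-design with negligible trace-distance error, as established in \textbf{Lemma 2.4} of Ref.~\cite{aaronson2023quantum} (cf.\ also \textbf{Proposition 1} of Ref.~\cite{feng2024}, cited earlier in this paper). In our language this reads
\begin{equation}
    \|\sigma - \rho_{\mathrm{Haar}}\|_1 = \operatorname{negl}(n),
\end{equation}
so that combining with the bound above yields $\|\rho - \rho_{\mathrm{Haar}}\|_1 \leq \operatorname{negl}(n) + 2p_{\mathrm{fail}} = \operatorname{negl}(n)$, as required.

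The only potentially subtle point—and therefore the main obstacle—is making sure that the object called $\sigma$ really matches the ideal ensemble averaged over uniformly random subsets and uniformly random sign functions, so that the cited pseudorandomness statement applies verbatim. The algorithms in Secs.~\ref{sec:subset-thermalization} and~\ref{sec:sign-thermalization} were shown to thermalize the $t$-th moment of subsets and signs with failure probability $p_{\mathrm{fail}} = \operatorname{negl}(n)$, and by definition $\sigma$ is precisely the part of the sampled density matrix coming from successful thermalization. With that identification the argument reduces to the two-line triangle-inequality bound above, and no further technical estimates are needed.
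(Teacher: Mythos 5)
Your proposal is correct and follows essentially the same route as the paper: both decompose $\rho-\rho_{\mathrm{Haar}}$ via the mixture structure, bound the $\sigma_{\mathrm{E}}$ contribution by $O(p_{\mathrm{fail}})$ using boundedness of the trace distance, and bound $\operatorname{TD}(\sigma,\rho_{\mathrm{Haar}})$ by the known pseudorandomness of the ideal random subset phase state ensemble (the paper cites \textbf{Theorem 2.1} of Ref.~\cite{aaronson2023quantum} rather than \textbf{Lemma 2.4}, but this is the same underlying fact). The only cosmetic difference is your algebraic regrouping versus the paper's direct convexity bound; both yield $\operatorname{negl}(n)+O(p_{\mathrm{fail}})$.
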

\begin{proof}
    Let $\operatorname{TD}(\rho,\sigma)$ be the trace distance between $\rho$ and $\sigma$. Then, due to the convexity of the trace distance, we have
    \begin{equation}
        \begin{split}
            \operatorname{TD}(\rho,\rho_\mathrm{Haar}) 
            &\leq  p_\mathrm{fail}\operatorname{TD}(\sigma_\mathrm{E}, \rho_\mathrm{Haar}) \\
            &\quad + (1-p_\mathrm{fail})\operatorname{TD}(\sigma, \rho_\mathrm{Haar}).
        \end{split}
    \end{equation}
    Due to \textbf{Theorem 2.1} of Ref.~\cite{aaronson2023quantum}, $\operatorname{TD}(\sigma, \rho_\mathrm{Haar})$ is negligible in $n$. In addition, $\operatorname{TD}(\sigma_\mathrm{E}, \rho_\mathrm{Haar})$ is upper bounded by unity. Then, it follows that
    \begin{equation}
        \begin{split}
            \operatorname{TD}(\rho,\rho_\mathrm{Haar}) 
            &\leq p_\mathrm{fail} + \operatorname{negl}(n) \\
            &= \operatorname{negl}(n).
        \end{split}
    \end{equation}
\end{proof}

\section{Comparison with previous works}\label{sec:previous-work-comparision}
Here, we compare performance of our algorithm for generating pseudorandom states with that of previous works. To the best our knowledge, our work provides the fastest algorithm, the combination of \textbf{Algorithm}~\ref{alg:depth_opt} and~\ref{alg:depth_opt_sign}, for generating pseudorandom states, which only requires $\omega(\log n)\cdot O(t[\log t]^2)$ depth. Below, we briefly review the previous works.

Pseudorandom states have been produced in various ways. One approach is to use local random circuits. In $D$-dimension, local random circuits can generate pseudorandom states in $\operatorname{poly}(t)\cdot n^{1/D}$ depth, where $\operatorname{poly}(t)$ is much larger than $O(t)$~\cite{Brand_o_2016,Haferkamp2022randomquantum,Harrow_2023}. Another way to generate pseudorandom states is to use projected ensembles, which require an evolving time of $[f(t)]^{1/\alpha} \omega(\operatorname{poly}(n))$ for some increasing function $f(t)$ and constant $\alpha$~\cite{Cotler2023}. The copy-dependence of the depth can be reduced to $O(t^2\cdot n\log n)$ by using random Pauli rotations~\cite{haah2024}. This was conjectured to be further reduced to $\omega(n\log n)\cdot O(t)$ using random automaton circuits~\cite{feng2024}. Very recently, an almost linear $t$ dependence of the depth has been achieved by gluing small random unitary designs, producing pseudorandom states in $\omega(\log n)\cdot O(t [\log t]^7)$ depth without using ancillary qubits~\cite{schuster2024}. Finally, pseudorandom states with odd prime qudits can be generated by $O(\log n)$ depth circuits by using inflationary quantum gates~\cite{Chamon2024}. However, there is no known such gate in qubit systems, so the circuit depth for generating pseudorandom states is lower bounded by $\omega(\log n)$, which is saturated by our results.

\section{Conclusion}
In this work, we provide algorithms that approximately generate random subset phase states by circuits with small numbers of gates or short depth, greatly improving upon previous methods. This directly implies that our algorithms can be used to simulate Haar random states, which has been widely used in various applications, including classical shadow tomography, benchmarking quantum circuits, demonstrating quantum supremacy, without compromising their quality, assuming only a polynomial number of measurements is allowed. Furthermore, our algorithms can be used to implement pseudo-chaotic dynamics in a very short depth~\cite{lee2024pseudo,gu2024chaos}. We believe this opens up the possibility of performing these tasks on near-term quantum devices.

\begin{acknowledgments}
We thank Byungmin Kang for helpful discussions. W.L. and G.Y.C. are supported by Samsung Science and Technology Foundation under Project Number SSTF-BA2002-05 and SSTF-BA2401-03, the NRF of Korea (Grants No.~RS-2023-00208291, No.~2023M3K5A1094810, No.~2023M3K5A1094813, No.~RS-2024-00410027, No.~RS-2024-00444725) funded by the Korean Government (MSIT), the Air Force Office of Scientific Research under Award No.~FA2386-22-1-4061, and Institute of Basic Science under project code IBS-R014-D1. H.K. is supported by the KIAS Individual Grant No. CG085301 at Korea Institute for Advanced Study and National Research Foundation of Korea (Grants No. 2023M3K5A109480511 and No. 2023M3K5A1094813).
\end{acknowledgments}

\bibliography{Ref}

\end{document}